\def\AP            {{\ensuremath{A_{\mathcal P}}}}
\def\be            {\begin{equation}}
\def\bearl         {\begin{array}{l}}
\def\bearll        {\begin{array}{ll}}
\def\bimod         {\mbox{\rm -bimod}}
\def\boti          {\,{\boxtimes}\,}
\def\C             {{\ensuremath{\mathcal C}}}
\def\calc          {{\ensuremath{\mathcal C}}}
\def\cald          {{\ensuremath{\mathcal D}}}
\def\calm          {{\ensuremath{\mathcal M}}}
\def\calw          {{\ensuremath{\mathcal W}}}
\def\cir           {\,{\circ}\,}
\def\cobord        {\mathrm{cobord}}
\def\complex       {{\ensuremath{\mathbb C}}}
\def\cov           {\mathrm{cov}}
\def\dim           {\mathrm{dim}}
\def\dimc          {\dim_\complex}
\def\ee            {\end{equation}}
\def\eear          {\end{array}}
\def\End           {{\ensuremath{\mathrm{End}}}}
\def\env           {{\mathrm{env}}}
\def\erf           {\eqref }
\def\eq            {\,{=}\,}
\def\Fun           {{\ensuremath{\mathrm{Fun}}}}
\newcommand\hsp[1] {\mbox{\hspace{#1 em}}}
\def\id            {\mbox{\sl id}}
\def\Id            {\mbox{\sl Id}}
\def\idsm          {\mbox{\footnotesize\sl id}}
\def\Idsm          {\mbox{\footnotesize\sl Id}}
\def\iN            {\,{\in}\,}
\newcommand\labl[1]{\label{#1}\ee}
\def\Mod           {\mbox{\rm -mod}}
\def\mti           {\,{\mtimes}\,}
\def\mtimes        {\odot}  
\newcommand\Nxl[1] {\\[-1.3em]\\[#1mm]}
\def\ohr           {\varrho}
\def\one           {{\bf1}}
\def\oned          {{\bf1_{\mathcal D}}}
\def\op            {^{\mathrm{op}}}
\def\oti           {\,{\otimes}\,}
\def\Oti           {{\otimes}}
\def\P             {{\ensuremath{\mathcal P_{\mathcal D}}}}
\def\Po            {{\mathcal P}}
\def\pop           {Y}
\def\rev           {{\mathrm{rev}}}
\def\Sigma         {\varSigma}
\def\Spq           {\ensuremath{S_{\vec p,\vec q}}} 
\def\sse           {\scriptsize }
\def\T             {{\ensuremath{\mathcal T}}}
\def\tft           {\mathrm{tft}}
\def\threedim      {three-di\-men\-si\-o\-nal}
\def\Times         {\,{\times}\,}
\def\To            {\,{\to}\,}
\def\ucalc         {_{\mathcal C}}     
\def\ucald         {_{\mathcal D}}     
\def\Vect          {\ensuremath{\mathrm{vect}}}
 \newcommand\void[1]{}
\def\VVect         {\ensuremath 2\mbox{-}\ensuremath{\mathrm{vect}}}
\def\xymapsto      {\!\xymatrix@C=0.5cm{\ar@{|->}[r] &}\!}
\def\xyRightarrow  {\!\xymatrix@C=0.5cm{\ar@{=>}[r] &}\!}
\def\xyRightarroww {\!\xymatrix@C=2.8em{\ar@{=>}[r] &}\!}
\def\xyto          {\!\xymatrix@C=0.5cm{\ar[r] &}\!}
\def\Z             {{\ensuremath{\mathcal Z}}}
\def\zet           {{\ensuremath{\mathbb Z}}}
\newtheorem{thm}{Theorem}
\newtheorem{lem}[thm]{Lemma}
\newtheorem{prop}[thm]{Proposition}
\theoremstyle{definition}
\newtheorem{rem}[thm]{Remark}
\newcommand\eqpic[4]{\begin{eqnarray}
                   \begin{picture}(#2,#3){}\end{picture}\nonumber\\
                   \raisebox{-#3pt}{ \begin{picture}(#2,#3) #4 \end{picture} }
                   \label{#1} \\~\nonumber \end{eqnarray} }
\newcommand\Includepicfj[2]   {{\begin{picture}(0,0)(0,0)
                            \scalebox{.#1}{\includegraphics{#2.eps}}\end{picture}}}
\begin{document}

\def\cir{\,{\circ}\,} 
                                     
\begin{flushright}
   {\sf ZMP-HH/13-18}\\
   {\sf Hamburger$\;$Beitr\"age$\;$zur$\;$Mathematik$\;$Nr.$\;$490}\\[2mm]
October 2013
\end{flushright}
\vskip 3.5em
                                   
\begin{center}
\begin{tabular}c \Large\bf 
A note on permutation twist defects \\[3mm] \Large\bf 
in topological bilayer phases
\end{tabular}\vskip 2.6em
                                     
  ~J\"urgen Fuchs\,$^{\,a}$ ~~~and~~~
  ~Christoph Schweigert\,$^{\,b}$

\vskip 9mm

  \it$^a$
  Teoretisk fysik, \ Karlstads Universitet\\
  Universitetsgatan 21, \ S\,--\,651\,88\, Karlstad \\[7pt]
  \it$^b$
  Fachbereich Mathematik, \ Universit\"at Hamburg\\
  Bereich Algebra und Zahlentheorie\\
  Bundesstra\ss e 55, \ D\,--\,20\,146\, Hamburg
                   
\end{center}
                     
\vskip 5.3em

\noindent{\sc Abstract}\\[3pt]
We present a mathematical derivation of some of the most important physical 
quantities arising in topological bilayer systems with permutation
twist defects as introduced by Barkeshli et al.\ \cite{bajQ2}.
A crucial tool is the theory of permutation equivariant modular functors 
developed by Barmeier et al.\ \cite{bfrs2,barSc}.

\newpage

\section{Introduction}\label{intro}

Topological phases of matter have become one of the most important areas of 
interplay between mathematical physics and condensed matter physics. It has 
been known for a long time that three-dimensional topological field theories 
of Reshetikhin-Turaev type describe certain universality classes of robustly
gapped systems in 2+1 dimensions, e.g.\ of quantum Hall fluids. This class of
theories includes in particular abelian Chern-Simons theories, which can be
defined using integer lattices. In more generality, a basic datum describing 
a topological phase is a modular tensor category (this notion will be
recalled in Section \ref{sec:2}).

We wish to study a topological phase described by some modular tensor category
\cald, exclusively working in the context of the category \cald. From
\cald\ one can construct an extended topological field theory, as a
symmetric monoidal 2-functor
  \be
  \tft_\cald:\quad \cobord_{3,2,1} \longrightarrow \VVect 
  \label{ord:cft}\ee
from extended cobordisms to 2-vector spaces, i.e.\ to finitely semisimple
abelian categories; we refer to \cite{mort6} for a 
review of the relevant notions. 
The 2-functor $\tft_\cald$ provides us with the following structure.
  \def\leftmargini{1.27em}~\\[-1.55em]\begin{itemize}\addtolength{\itemsep}{-7pt}

\item
For a one-dimensional oriented manifold $S$ one gets a category: for
$S$ the disjoint union of $n$ circles, one has 
$\tft_\cald(S) \eq \cald^{\boxtimes n}$, with $\boxtimes$ the Deligne product 
of categories enriched over complex vector spaces. The category $\cald$ 
associated to a circle has the physical interpretation of labels for Wilson 
lines and for point-like insertions on Wilson lines. Thus in the condensed 
matter system the isomorphism classes of objects of \cald\ describe types of 
quasi-particle excitations in the topological phase.

\item
For a surface, possibly with marked points -- or rather, closed disks excised
around the marked points -- that are end points of Wilson lines and thus
carry the labels of quasi-particles, we get a vector space of conformal blocks.
In the context of topological quantum computing these spaces, as subspaces 
of vector spaces arising in a suitable microscopical model, have been proposed 
for quantum codes; see e.g.\ \cite{bmca,kiKon} for details in the context of 
Hopf algebras and of lattice models corresponding to TFTs of Turaev-Viro type, 
respectively.
\\
(In the present paper we are only interested in the behavior
of universality classes. We do not touch the very important question
of their possible realization in microscopic models.)

\item
Three-dimensional manifolds with corners give linear maps between the spaces 
of conformal blocks assigned to their boundaries. On the category \cald\ this 
amounts in particular to the structure of a braiding, and on the vector spaces 
of conformal blocks to representations of mapping class groups. 
The latter are of interest in the implementation of quantum gates. For example, 
one needs information about the `size' of the representations of the mapping 
class group to know whether a given system allows for universal quantum gates.

\end{itemize}
This structure raises in particular three types of 
questions about topological phases: 
  \def\leftmargini{1.27em}~\\[-1.55em]\begin{itemize}\addtolength{\itemsep}{-7pt}

\item
The problem of classifying the possible types of quasi-particles.

\item
The problem of computing the dimension of the associated spaces of conformal 
blocks, which has the interpretation of the number of qubits that can be 
stored in the corresponding quantum code.

\item
The problem of understanding the braid group
representations on spaces of conformal blocks.
\end{itemize}
 
In recent years physical boundaries and surface defects in three-dimensional
topological field theories have attracted increasing attention.
In this note we do not consider boundaries; we furthermore restrict 
our attention to surface defects in a single topological phase described
by the modular tensor category \cald.

Surface defects lead to a riches of phenomena. Specifically,
the three types of problems just stated have natural generalizations to 
situations in which defects are present. These are the issues we discuss in 
the present note, for the specific case of twist defects in bilayer systems.
  \def\leftmargini{1.27em}~\\[-1.55em]\begin{itemize}\addtolength{\itemsep}{-7pt}

\item
For any pair $a,\,a'$ of topological surface defects there is a category
$\calw_{a,a'}$ of Wilson lines confined to the defect surface, which
separate a surface region labeled by $a$ from a surface region labeled by $a'$.
The objects of this category label defect Wilson lines, while the morphisms
label point-like insertions on those Wilson lines. In the application to
topological phases, such Wilson lines do not describe intrinsic
quasi-particles, but rather extrinsic objects with long-range interaction
\cite{bajQ2}. More generally, there are
categories of Wilson lines at which an arbitrary  finite number of surface
defects end (see the figure \erf{Figure13+Figure14} below).
A first problem is to obtain a concrete description of these categories.

\item
Surface defects and generalized Wilson lines can intersect transversally 
surfaces to which one would like to associate appropriate generalizations of
conformal blocks. This raises the question of how to define these vector spaces
and how to obtain expressions for
their dimensions, generalizing the Verlinde formula.

\item
On these vector spaces, one has the action of appropriate versions of mapping 
class groups. It is an important task to understand these groups and their 
actions in detail.

\end{itemize}

It has been demonstrated in a model-independent analysis \cite{fusV} that 
surface defects which separate two regions that are both in the phase labeled 
by \cald\ are described by module categories over the modular tensor category 
\cald\ (again, the notion of \cald-module category, or \cald-module, will be 
recalled in Section \ref{sec:2}). For a general modular tensor category
\cald, the classification of \cald-modules, and thus of surface defects,
is out of reach. Notable exceptions are abelian Chern-Si\-mons theories, as 
studied in \cite{kaSau2,fusV,bajQ3}, and Dijkgraaf-Witten theories,
for which \cald\ is the category of finite-dimensional representations of the 
Drinfeld double of a finite group $G$, as well as the category of integrable 
highest weight representations of the affine Lie algebra based on 
$\mathfrak{sl}(2,\complex)$ at positive integral level. For Dijkgraaf-Witten
theories subgroups of $G$ and certain group cochains enter the classification 
\cite{ostr5} (for the corresponding boundary conditions and surface defects 
and their geometric interpretation see \cite{fusV2}), while in the 
$\mathfrak{sl}(2,\complex)$ case an A-D-E classification emerges \cite{kios}.

Any modular tensor category \cald\ has at least one indecomposable module,
namely the regular \cald-module given by the abelian category \cald\ itself, 
with the action being just the ordinary tensor product of \cald. The 
corresponding defect $\T_\cald$ is the \emph{transparent} (or invisible) defect,
which is 
a monoidal unit under fusion of surface defects. (For a discussion of $\T_\cald$
in the context of Dijkgraaf-Witten models see \cite[Sect.\,3.6]{fusV2}.)
Generically it can be hard to find other \cald-mod\-u\-les besides $\T_\cald$.
There is, however, one general situation in which a non-trivial module 
category can be identified, namely when the modular tensor category \cald\ 
is the Deligne product $\cald \eq \calc\boti\calc$ of two copies of another 
modular tensor category \calc\ (with the same structure of modular tensor 
category, in particular the same braiding, on each copy). In the condensed 
matter literature this situation is known as a \emph{bilayer system},
see e.g.\ \cite{baWen3,baQi2,bajQ2} for recent work.   

For any bilayer system described by $\cald \eq \calc\boti\calc$, in addition to
the regular \cald-module there is a second module category, which is defined 
on a single copy of the abelian category \calc; its module category structure 
has been given in \cite[Thm.\,2.2]{bfrs2}. This type of surface defect is 
the main subject of the present letter. We will present it in detail in Section 
\ref{sec:2}, where we also demonstrate that it corresponds to the permutation 
twist defect \P\ \cite{bajQ2} of the bilayer system, which permutes the 
different layers. We then also obtain the relevant categories of generalized 
Wilson lines, thereby providing a model-independent solution of
the problem of describing the generalized 
quasi-par\-ticle excitations in the presence of the permutation twist defect. 
In Section \ref{sec:3} we present the relevant spaces of conformal blocks. In 
applications to quantum computing their dimensions give the number of qubits 
that can be stored in a topological code realized by bilayer fractional quantum 
Hall states in the universality class of the bilayer topological phase.

The results reported in this letter have obvious generalizations to $n$-layer
systems, corresponding to the situation that $\cald \eq \calc^{\boxtimes n}$ is 
the Deligne product of any number $n$ of copies of the same modular tensor 
category \calc. In this case we find a \cald-module category for every element
of the symmetric group $S_n$ that describes a permutation of the $n$ layers. 
The mathematical tools to describe these systems in detail are available
\cite{bohs,bifs,bant8,bfrs2,barSc}: for the sake of clarity of
the exposition we refrain from presenting them in this note.


\section{The module category {\boldmath \P} and categories of defect Wilson lines}
\label{sec:2}

\paragraph{Module categories.}
 
A \emph{fusion category} (over \complex) is a rigid semisimple linear monoidal 
category, enriched over the category of (complex) vector spaces, with only 
finitely many isomorphism classes of simple objects, such that the endomorphisms 
of the monoidal unit form just the ground field, $\End(\one) \eq \complex\,
\id_\one$. All categories in this letter will be finitely semisimple abelian
$\complex$-linear categories. Usually we assume fusion categories to be 
strictly monoidal, so that we can drop the associativity constraints. All 
categories of Wilson lines in topological field theories, including defect and 
boundary Wilson lines, are in our context described by fusion categories.
A \emph{modular tensor category} \calc\ is a braided fusion category in which
the braiding $c_{U,V}\colon U\oti V \,{\xrightarrow{\,\cong\,}}\, V\oti U$ obeys
a non-degeneracy condition: the $|I_\calc|\Times|I_\calc|$-matrix with entries 
  \be
  s_{i.j}^{} := \mathrm{tr}(c^{}_{S_j,S_i}\cir c^{}_{S_j,S_i}) \,, 
  \labl{smat}
where $(S_i)_{i\in I_\calc}$ is 
a set of representatives for the isomorphism classes of simple objects of $\calc$,
is invertible.
Examples of modular tensor categories arise from Chern-Simons theories, in which
case they can be described in terms of finite abelian groups with a quadratic
form (abelian Chern-Simons theories) or of integrable highest weight 
representations of affine Lie algebras (non-abelian Chern-Simons theories).

A \emph{module category} \calm\ over a monoidal category \cald\ (or a
\cald-module, for short) consists of a category \calm\ and  a \complex-linear 
bifunctor $\mtimes\colon \cald \Times \calm \To \calm$ together with functorial 
associativity and unit isomorphisms
  \be
  (X\oti Y)\mtimes M \stackrel\cong\longrightarrow X \mtimes (Y\mti M)
  \qquad\text{and}\qquad
  \one\mtimes M \stackrel\cong\longrightarrow M 
  \ee
for $X,Y\iN\cald$ and $M\iN \calm$, obeying coherence conditions.
For details see e.g.\ \cite[Sect.\,2.3]{ostr}; 
we refer to the functor $\mtimes$ as the mixed tensor product. Thinking about
a fusion category as a categorification of a unital associative ring,
module categories are categorifications of modules over that ring. Taking the 
tensor product $\otimes\colon \cald\Times\cald \To \cald$ and the associated 
associativity and unit constraints of \cald, any fusion category \cald\ is, as 
already mentioned, a module category $\T_\cald$ over itself, analogously as 
any ring is a module over itself. In topological field theories with defects 
the module category $\T_\cald$ describes the transparent defect, separating 
two regions that both support the same topological phase labeled by \cald.

\paragraph{A module category for bilayer systems.}
 
In the case of bilayer systems, i.e.\ $\cald \eq \calc\boti\calc$ with
a modular tensor category \calc, there is another generic module category \P.
The abelian category underlying \P\ is just \calc\ itself. The mixed tensor 
product for \P\ is defined in terms of the tensor product $\otimes$ in \calc\ by
  \be
  (U\boti V) \mtimes M =  U \oti V \oti M
  \ee
for $M\iN\calc$ and $U\boti V\iN\cald$. (In the definition, it suffices to 
consider objects of $\cald$ of the form $U\boti V$ 
with $U,V\iN\calc$ only; such objects are called $\boxtimes$-factorizable.)
        
The existence of such a \cald-module generalizes the fact that for a 
{\em commutative} ring $R$, the tensor product ring $R \,{\otimes_\zet}\, R$ has
$R$ as a module, with action $(a\oti b)\,.\,m \eq a\, b\, m$ for $a,b,m\iN R$.
Commutativity of $R$ ensures that this prescription constitutes an action, 
i.e.\ the equality of 
$ [(a_1\oti a_2)\, (b_1\oti b_2)]\,.\,m \eq a_1\,b_1\,a_2\,b_2\, m$
and $ (a_1\oti a_2)\,.\,[ (b_1\oti b_2)\,.\,m] \eq a_1\,a_2\,b_1\,b_2\, m$.
The categorification of commutativity of $R$ is the structure of a
braiding; accordingly we take the natural isomorphisms 
  \be
  \bearl
  \psi_{X,X',M}^{}: \quad (X\oti X')\mtimes M = U \oti U' \oti V \oti V' \oti M
  \\[3pt] \hspace*{12.8 em}
  \stackrel\cong\longrightarrow \,
  X\mtimes (X'\mti M) = U \oti V \oti U' \oti V' \oti M
  \eear
  \ee
(with $X \eq U\boti V \iN \cald$, $X' \eq U' \boti V' \iN\cald$ and $M\iN\calc$)
as the associativity constraints for the mixed tensor product of \P\ that are 
part of the defining data of the module category to be given by the braiding in 
\calc, i.e.
  \be
  \psi_{X,X',M}^{} = \id_U^{}\otimes c_{U',V}^{} \otimes \id_{V'\otimes M}^{} \,.
  \ee
This mixed associativity constraint has been derived geometrically from 
the theory of covering surfaces \cite[Eq.\,(17)]{barSc}. There is in fact a 
whole family of possible constraints for the mixed tensor product,
involving higher powers of the braiding $c$
\cite[Thm.\,2.2]{bfrs2}, but they are all equivalent \cite[Thm.\,2.4]{bfrs2}.

\paragraph{{\boldmath \P} as part of an equivariant topological field theory.}
 
That \P\ is a twist defect \cite{bajQ2} has the following mathematical 
formalization: the module category \P\ with underlying abelian category \calc\ 
is part of a more comprehensive structure \cite{barm2,barSc} -- together with 
the modular category  \cald\ it forms a $\zet_2$-\emph{equivariant modular 
category} \cite{TUra2,kirI14}. This 
amounts to the existence of further mixed fusion functors, including functors 
  \be
  \calc \times \cald \to \calc \qquad\text{and}\qquad
  \calc\times \calc\to\cald 
  \ee
and constraints which we will need later in our discussion.
The structure of a $\zet_2$-equivariant modular
category can be derived from a $\zet_2$-equivariant topological field theory
  \be
  \tft_\cald^{\zet_2}: \quad \cobord_{3,2,1}^{\zet_2} \longrightarrow \VVect
  \ee
which has as domain a cobordism category of manifolds with $\zet_2$-covers. The
functor $\tft_\cald^{\zet_2}$ can be given explicitly by applying the functor 
$\tft_\calc$ (see \erf{ord:cft}) for the (non-equivariant) topological field 
theory based on \calc\ to the total spaces of the covers,
  \be
  \tft_\cald^{\zet_2}(-) = \tft_\calc(\cov(-)) \,.
  \labl{tftz2}
For a detailed 
discussion in the framework of modular functors we refer to \cite{barSc}.

\paragraph{The Azumaya algebra {\boldmath \AP}.}
 
We need to collect a few facts about the module category \P. As any semisimple 
indecomposable (left) module category over a fusion category \cald, the 
category \P\ describing the permutation twist defect can be realized \cite{ostr}
as the category of (right) modules over an algebra \AP\ internal in \cald,
  \be
  \P \simeq \text{mod-}\AP \,.
  \ee
One possible choice for this algebra is the internal end of the tensor unit 
$\one\iN\calc$, i.e.\ $\AP \eq \underline\End_\calc(\one)$. As an object of the 
category \cald\ this is \cite{bfrs2}
  \be
  \AP = \bigoplus_{i \in I_\calc } \, U_i^\vee\boxtimes U_i \,,
  \labl{AP}
where the sum is over the isomorphism classes of simple objects of \calc;
its algebra structure is given explicitly in \cite[Thm.\,5.1]{bfrs2}.
The algebra \AP\ has a natural Frobenius algebra structure, which is 
presented in \cite[Prop.\,6.1]{bfrs2}. Furthermore, it is an 
\emph{Azumaya algebra} in \cald\ \cite[Thm.\,7.3]{bfrs2}. 
Let us explain the latter notion. For an algebra $A$ in a braided fusion 
category \C\ there are two \emph{braided induction functors}
  \be
  \alpha^\pm_{\!A}:\quad \calc\to A\bimod_\calc \,.
  \labl{+-}
They associate to an object $U\iN\calc$ the bimodule with underlying object
$A\oti U$ and with the left action $\rho^\pm \,{:=}\, m_A\oti\id_U\colon A
\oti A\oti U\To A\oti U$ given by multiplication $m_A$ in $A$, while the right 
action is $\ohr^+ \,{:=}\,(m_A\oti \id_U)\cir(\id_A\oti c^{}_{U,A})$ and
$\ohr^- \,{:=}\,(m_A\oti \id_U)\cir(\id_A\oti c_{A,U}^{-1})$, respectively.
The functors \erf{+-} have a natural structure of a monoidal functor; 
$A$ is called an Azumaya algebra iff $\alpha_{\!A}^+$, or equivalently
$\alpha_{\!A}^-$, is a monoidal equivalence. For \calc\ the monoidal category
of modules over a commutative ring, this coincides with the textbook
definition of Azumaya algebras \cite{vazh}.

\begin{rem}
(i)\, Recall that for describing a bilayer system we have to take the same 
braiding on the two copies of \calc\ in $\cald\eq \calc\boti\calc$.
There is another important structure of a braided fusion category on
the tensor category $\calc\boti\calc$, namely the `enveloping category'
$\calc^\env \eq \calc\boti\calc^\rev$ in which the second copy of \calc\ 
is instead endowed with the inverse braiding. If \calc\ is modular, then the 
enveloping category is a modular category as well; in fact, modularity implies 
that it is equivalent to the Drinfeld center of \calc, i.e.\ $\calc^\env
\,{\simeq}\, \Z(\calc)$. This structure of a modular tensor category is 
\emph{not} the one relevant for bilayer systems.
\\[2pt]
(ii)\, When regarded as an object of the enveloping category 
$\calc^\env \eq \calc\boti\calc^\rev$ the object \erf{AP} of the abelian 
category $\calc\boti\calc$ has again a natural Frobenius algebra structure, 
which is of interest in various other contexts, see e.g.\ 
\cite{muge8,ffrs,koRu,fuSs6}. This algebra structure on the object \erf{AP} is 
commutative with respect to the braiding of $\calc^\env$, rather than Azumaya.
\end{rem}

\paragraph{Braided induction for tensor products of algebras.}

Given two unital associative algebras $A_1$ and $A_2$ in a braided monoidal 
category, their tensor product $A_1\oti A_2$ can be endowed with the structure
of a unital associative algebra with multiplication
  \be
  (A_1\oti A_2)\otimes (A_1\oti A_2)
  \xrightarrow{~\idsm^{}_{A_1}\otimes c_{A_2,A_1}^{}\otimes\idsm^{}_{A_2}~}
  A_1\oti A_1 \oti A_2\oti A_2 \xrightarrow{~m_{A_1}^{}\otimes m_{A_2}^{}~}
   A_1\oti A_2 \,. 
  \ee
(Replacing the over-braiding $c_{A_2,A_1}^{}$ by the under-braiding 
$c_{A_1,A_2}^{-1}$ yields a different algebra structure on the same object 
$A_1\oti A_2$. This algebra is isomorphic, as an associative algebra, to the 
algebra structure on $A_2\oti A_1$ obtained with the convention chosen here.)

We will now establish a relation between the braided induction functors for the 
algebras $A_1$ and $A_2$ and those for $A_1\oti A_2$. We first introduce a 
functor
  \be
  \beta^+: \quad  A_2\bimod\ucalc \longrightarrow (A_1{\otimes} A_2)\bimod\ucalc
  \ee
that sends $B \,{\equiv}\, (B,\rho,\ohr) \iN A_2\bimod\ucalc$ to $A_1\oti B$ 
with the $(A_1{\otimes}A_2)$-bimodule structure given by the left action
  \be
  (A_1\oti A_2)\otimes (A_1\oti B)
  \xrightarrow{~\idsm_{A_1}\otimes c_{A_2,A_1}^{}\otimes\idsm_B~}
  A_1\oti A_1\oti A_2\oti B \xrightarrow{~m_{A_1}^{}\otimes \rho~} A_1\oti B
  \ee
and the right action
  \be
  (A_1\oti B)\otimes (A_1\oti A_2)
  \xrightarrow{~\idsm_{A_1}\otimes c_{B,A_1}^{}\otimes \idsm_{A_2}~}
  A_1\oti A_1\oti B\oti A_2 \xrightarrow{~m_{A_1}^{}\otimes \ohr~} A_1\oti B \,.
  \ee
Again, $\beta^+$ has a natural monoidal structure.  Moreover, one verifies that
  \be
  \alpha_{\!A_1\otimes A_2}^+ = \beta^+ \circ \alpha_{\!A_2}^+
  \ee
as monoidal functors. Similarly,we introduce another monoidal functor 
  \be
  \beta^-: \quad A_1\bimod\ucalc \rightarrow (A_1{\otimes} A_2)\bimod\ucalc \,,
  \ee
sending $B\,{\equiv}\, (B,\rho,\ohr) \iN A_1\bimod\ucalc$ to $A_2\oti B$ with 
the $(A_1{\otimes}A_2)$-bimodule structure given by the left action
  \be
  (A_1\oti A_2)\otimes (A_2\oti B) 
  \xrightarrow{~\idsm_{A_1}\otimes m_{A_2}^{}\otimes \idsm_B\,}
  A_1\oti A_2\oti B \xrightarrow{~c_{A_2,A_1}^{-1}\otimes \idsm_B\,}
  A_2\oti A_1\oti B \xrightarrow{~\idsm_{A_2}\otimes\rho\,} A_2\oti B 
  \ee
and the right action
  \be
  (A_2\oti B)\otimes (A_1\oti A_2)
  \xrightarrow{~\idsm_{A_2}\otimes\ohr\otimes\idsm_{A_2}\,}
  A_2\oti B \oti A_2 \xrightarrow{~\idsm_{A_2}\otimes c_{A_2,B}^{-1}\,}
  A_2\oti A_2\oti B \xrightarrow{~m_{A_2}^{}\otimes \idsm_B\,} A_2\oti B\,.
  \ee
By direct calculation one sees that the family 
  \be
  \nu_U^{} := c_{\!A_2,A_1}^{}\oti\id_U :\quad A_2\oti A_1\oti U 
  \longrightarrow A_1\oti A_2\oti U
  \ee
of isomorphisms, for $U\iN\calc$, furnishes a monoidal natural isomorphism 
$\nu\colon \beta^-\cir \alpha_{\!A_1}^- \,{\Longrightarrow}\,
\alpha_{\!A_1\otimes A_2}^-$. Similarly one verifies that the 
the same family of isomorphisms gives a monoidal natural isomorphism
  \be
  \tilde\nu:\quad \beta^-\circ \alpha_{\!A_1}^+ \Longrightarrow
  \beta^+ \circ \alpha_{\!A_2}^- \,.
  \ee
This is summarized in the

\begin{prop}
The following diagram of monoidal functors and monoidal natural isomorphisms
commutes:
  \be
  \xymatrix @R+6pt{
  && ~~~(A_1{\otimes}A_2)\bimod\ucalc~~~
  && \\ {}\save[]*\txt{%
    \begin{picture}(0,0)
    \put(54,-7)  {\begin{turn}{118}\xyRightarroww\end{turn}}
    \put(89,32)  {$\scriptstyle \nu$}
    \put(280,38) {\begin{turn}{62}\xyRightarroww\end{turn}}
    \put(332,24) {$\Idsm$}
    \end{picture}}%
  \restore
  &  ~~~~~A_1\bimod\ucalc~~~ \ar_{\beta^-}[ur] \ar@{=>}^{\tilde\nu}[rr]
  && ~~~A_2\bimod\ucalc~~~~~ \ar^{\beta^+}[ul]
  & \\
  \calc \ar_{\alpha_{\!A_1}^-}[ur] \ar@/^5pc/[uurr]^{\alpha_{\!A_1\otimes A_2}^-}="2"
  && \calc \ar_{\alpha_{\!A_2}^-}[ur]\ar^{\alpha_{\!A_1}^+}[ul]
  && \calc \ar^{\alpha_{\!A_2}^+}[ul] \ar@/_5pc/[uull]_{\alpha_{\!A_1\otimes A_2}^+}="1"
  }
  \labl{bigdiagram}
\end{prop}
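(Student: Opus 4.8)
The strategy is to verify that the big diagram \erf{bigdiagram} commutes by decomposing it into the three triangular/quadrilateral regions built from the natural isomorphisms $\nu$, $\tilde\nu$, and the identity $\Id$, and then checking that each region commutes as an equation between monoidal natural transformations. Since all the functors involved ($\alpha^\pm$, $\beta^\pm$, and their composites) have already been equipped with monoidal structures, and since a monoidal natural isomorphism between monoidal functors is determined by its underlying natural isomorphism together with the (automatically satisfied, once checked on components) compatibility with the monoidal constraints, the bulk of the work reduces to component-wise identities in \calc. Thus the plan is: (1) record the two already-established identities $\alpha^+_{A_1\otimes A_2}=\beta^+\circ\alpha^+_{A_2}$ and $\beta^-\circ\alpha^-_{A_1}\overset{\nu}{\Longrightarrow}\alpha^-_{A_1\otimes A_2}$; (2) establish the remaining leg, namely that $\nu$ also realizes $\beta^+\circ\alpha^-_{A_2}\overset{?}{\Longrightarrow}\alpha^-_{A_1\otimes A_2}$ when precomposed appropriately, i.e.\ that $\tilde\nu\colon\beta^-\circ\alpha^+_{A_1}\Longrightarrow\beta^+\circ\alpha^-_{A_2}$ fits together with $\nu$; and (3) conclude that the composite $\nu\bullet(\beta^+\tilde\nu)$ (whiskered suitably) equals the identity transformation on $\alpha^-_{A_1\otimes A_2}$, which is exactly the assertion that the diagram commutes.

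The concrete computations are all of the same flavour: one writes out, for a fixed object $U\iN\calc$, the left and right $(A_1{\otimes}A_2)$-actions on the objects $A_2\oti A_1\oti U$ and $A_1\oti A_2\oti U$ as composites of multiplications $m_{A_1},m_{A_2}$ and braidings $c_{A_2,A_1}$, and then checks that conjugation by $\nu_U=c_{A_2,A_1}\oti\id_U$ intertwines them. For the left actions this is essentially the hexagon/naturality identity for the braiding together with associativity of $m_{A_1}$ and $m_{A_2}$; for the right actions it is the corresponding statement involving $c_{B,A_1}$ or $c_{A_2,B}^{-1}$, which again follows from naturality of $c$ in the bimodule argument and the Yang–Baxter relation. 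One then also checks the monoidality squares, i.e.\ that $\nu_U$ is compatible with the monoidal structure maps on $\beta^-\circ\alpha^-_{A_1}$ and on $\alpha^-_{A_1\otimes A_2}$ (and likewise for $\tilde\nu$); here the key inputs are the definition of the monoidal structure on $\alpha^\pm$ (which involves one power of the braiding $c$) and the explicit multiplication on $A_1\oti A_2$ displayed just above the proposition, so that both sides are seen to be the same string of $m$'s and $c$'s after applying Yang–Baxter once.

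For bookkeeping I would set up the argument with string diagrams (or at least with the convention that all composites are read as morphisms in the strict braided category \calc) so that each identity becomes a planar isotopy of a braid-with-multiplications diagram; this makes the Yang–Baxter moves and the naturality moves visually transparent and keeps the verification short. After establishing the three regional identities, the commutativity of \erf{bigdiagram} follows by pasting: the outer boundary from $\calc$ (bottom-left) to $(A_1{\otimes}A_2)\bimod\ucalc$ along $\alpha^-_{A_1\otimes A_2}$ is, by $\nu$, the same as going $\beta^-\circ\alpha^-_{A_1}$; going the other way around through $\tilde\nu$ and $\Id$ and $\beta^+\circ\alpha^+_{A_2}=\alpha^+_{A_1\otimes A_2}$ produces the partner statement, and the two are matched precisely by the claimed commutativity.

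The step I expect to be the main obstacle is not any single identity but keeping the braiding conventions consistent across all three functors: $\beta^+$ is built with the over-braiding $c_{A_2,A_1}$, $\beta^-$ with the under-braiding $c_{A_1,A_2}^{-1}$, and $\alpha^\pm$ mix $c$ and $c^{-1}$ in their actions and in their monoidal constraints, so a sign/orientation error anywhere propagates and the diagram fails to close. Concretely, the delicate point is verifying $\tilde\nu\colon\beta^-\circ\alpha^+_{A_1}\Longrightarrow\beta^+\circ\alpha^-_{A_2}$ is monoidal: this is where both a power of $c_{A_2,A_1}$ coming from the target monoidal structure and a power coming from the source have to cancel against the braiding appearing in $\nu_U$ itself, and the cancellation only works because of the Yang–Baxter relation applied to the three strands $A_1$, $A_2$, $U$ (together with a further strand when one tensors two objects of \calc). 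Once that compatibility is checked, everything else is routine.
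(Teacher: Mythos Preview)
Your plan to check each of the three regions separately by direct computation with braidings and multiplications is exactly what the paper does: the text preceding the proposition verifies $\alpha^+_{A_1\otimes A_2}=\beta^+\circ\alpha^+_{A_2}$ on the nose, then asserts that ``by direct calculation'' the family $\nu_U=c_{A_2,A_1}\otimes\id_U$ furnishes the monoidal natural isomorphism $\nu$, and ``similarly'' the same family gives $\tilde\nu$; the proposition is then explicitly introduced as a \emph{summary} of these three facts. So steps (1) and (2) of your plan, together with the string-diagram bookkeeping you describe, already constitute the entire proof.

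Where you go astray is in step (3) and in the final paragraph. You treat the diagram as a pasting diagram with a single source $\calc$ and imagine an outer coherence condition like ``$\nu\bullet(\beta^+\tilde\nu)$ equals the identity on $\alpha^-_{A_1\otimes A_2}$''. But the three copies of $\calc$ at the bottom are genuinely distinct vertices with no $1$-cells between them; the functors emanating from them ($\alpha^-_{A_1}$ versus $\alpha^+_{A_1}$ versus $\alpha^+_{A_2}$) are different, so the three $2$-cells $\nu$, $\tilde\nu$, $\Id$ cannot be pasted with one another at all. ``The diagram commutes'' here means nothing more than that each of the three displayed $2$-cells is a well-defined monoidal natural isomorphism between the indicated composites. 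There is no global condition left over once the three regions are checked, so your step (3) is both unnecessary and ill-posed; drop it and you are done.
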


\paragraph{The Azumaya algebra {\boldmath $\AP\oti\AP$}.}

We use these observations to study the algebra $\AP\oti \AP$ internal in \cald. 
As a tensor product of two Azumaya algebras, it is again Azumaya. 
Now recall \cite[Cor.\,3.8]{danO} that, up to equivalence, an
indecomposable module category \calm\ over a modular tensor category
\cald\ is characterized by a pair $B_1,B_2$ of connected \'etale algebras in 
\cald\ together with a braided equivalence $\Psi_{\!\calm}\colon
B_1\Mod^0\ucald \,{\stackrel\simeq\longrightarrow}\, B_2\Mod^{0\;\text{rev}}\ucald$ 
between the category of local $B_1$-mo\-dules and the reverse of the category of 
local $B_2$-mo\-dules. It follows from the results of \cite{ffrs} that 
for the module category $\calm \eq$mod-$A$ of right modules over an 
algebra $A\iN\cald$ these characteristic data can be extracted from the 
braided induction functors $\alpha^\pm_{\!A}\colon \cald\To A$-bimod$\ucald$. 
In the particular case that $A$ is an Azumaya algebra, the two functors
$\alpha_{\!A}^\pm$ are monoidal equivalences and the two \'etale algebras 
$B_1$ and $B_2$ are just the tensor unit $\one$, so that 
$B_1\Mod\ucald^0 \eq \cald \eq B_2\Mod\ucald^0$. Moreover, in this case the 
braided equivalence $\Psi_{\!\text{mod-}A}\colon \cald\To\cald$ is given by
  \be
  \Psi_{\!\text{mod-}A} = (\alpha_{\!A}^+)^{-1}_{} \circ\alpha_{\!A}^- \,.
  \labl{PsiA}
Now according to \cite[Prop.\,7.3]{bfrs2}, for the Azumaya algebra \AP\ the
local induction functors satisfy
$\alpha_\AP^+(U \boti V) \,{\cong}\, \alpha_\AP^-(V \boti U),$. This implies 
that the functor $\Psi_{\text{mod-}\AP}$ acts on objects and morphisms by 
permutation, in particular 
  \be
  \Psi_{\!\text{mod-}\AP}(U\boti V) = V\boti U \,.
  \labl{PsiAp}

\begin{rem}
We can now see that the module category \P\ over \cald\ describes the 
permutation twist surface defect of 
\cite{bajQ4}. To this end  we note \cite[Sect.\,4]{fusV} that for a defect 
surface labeled by a \cald-module \calm\ the functor $\Psi_{\!\calm}$ 
describes the transmission of bulk Wilson lines through the defect surface.
Thus for $A \eq \AP$ bulk Wilson lines get permuted according to \erf{PsiAp} 
when passing through the defect surface. This is precisely the property
characterizing the permutation twist defect \cite{bajQ4}.
\end{rem}

\begin{lem}
Let $A$ and $A'$ be  Azumaya algebras in a braided fusion category. Then we 
have the isomorphism 
  \be
  \Psi_{\!\text{\rm mod-}A\otimes A'} \,\cong\,
  \Psi_{\!\text{\rm mod-}A} \circ \Psi_{\!\text{\rm mod-}A'}
  \labl{PsiAA'}
of monoidal functors.
\end{lem}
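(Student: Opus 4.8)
The plan is to read off both sides of \erf{PsiAA'} from the big commutative diagram \erf{bigdiagram}, using the formula \erf{PsiA} that expresses each $\Psi_{\text{mod-}A}$ as $(\alpha_A^+)^{-1}\cir\alpha_A^-$. Write $A_1\df A$ and $A_2\df A'$. By definition \erf{PsiA}, the left-hand side is
  \be
  \Psi_{\!\text{mod-}A_1\otimes A_2} = (\alpha_{\!A_1\otimes A_2}^+)^{-1}_{}\cir\alpha_{\!A_1\otimes A_2}^- \,.
  \ee
The diagram \erf{bigdiagram} supplies monoidal natural isomorphisms identifying the two outer curved functors with composites through the middle row: the isomorphism $\nu$ gives $\alpha_{\!A_1\otimes A_2}^-\cong\beta^-\cir\alpha_{\!A_1}^-$, while the identity-labelled cell on the right gives $\alpha_{\!A_1\otimes A_2}^+\cong\beta^+\cir\alpha_{\!A_2}^+$ (the relation $\alpha_{\!A_1\otimes A_2}^+=\beta^+\cir\alpha_{\!A_2}^+$ was in fact verified on the nose in the text preceding the proposition). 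Substituting these two identifications yields
  \be
  \Psi_{\!\text{mod-}A_1\otimes A_2} \cong
  (\beta^+\cir\alpha_{\!A_2}^+)^{-1}_{} \cir (\beta^-\cir\alpha_{\!A_1}^-)
  = (\alpha_{\!A_2}^+)^{-1}_{}\cir(\beta^+)^{-1}_{}\cir\beta^-\cir\alpha_{\!A_1}^- \,.
  \ee

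The heart of the argument is then to cancel the inner factor $(\beta^+)^{-1}\cir\beta^-$ using the remaining cell $\tilde\nu$ of \erf{bigdiagram}, which is the monoidal natural isomorphism $\tilde\nu\colon\beta^-\cir\alpha_{\!A_1}^+\Longrightarrow\beta^+\cir\alpha_{\!A_2}^-$. Applying $(\beta^+)^{-1}$ on the left and $(\alpha_{\!A_1}^+)^{-1}$ on the right turns $\tilde\nu$ into a monoidal natural isomorphism
  \be
  (\beta^+)^{-1}_{}\cir\beta^-\cir\alpha_{\!A_1}^+\cir(\alpha_{\!A_1}^+)^{-1}_{}
  \;\cong\;
  (\beta^+)^{-1}_{}\cir\beta^+\cir\alpha_{\!A_2}^-\cir(\alpha_{\!A_1}^+)^{-1}_{} \,,
  \ee
that is, $(\beta^+)^{-1}\cir\beta^-\cong\alpha_{\!A_2}^-\cir(\alpha_{\!A_1}^+)^{-1}$ as functors $A_1\bimod\ucalc\To A_2\bimod\ucalc$; here I use that $\alpha_{\!A_1}^+$ is a monoidal equivalence because $A_1=A$ is Azumaya, so that $(\alpha_{\!A_1}^+)^{-1}$ exists and the cancellations are legitimate. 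Inserting this into the previous display gives
  \be
  \Psi_{\!\text{mod-}A_1\otimes A_2} \cong
  (\alpha_{\!A_2}^+)^{-1}_{}\cir\alpha_{\!A_2}^-\cir(\alpha_{\!A_1}^+)^{-1}_{}\cir\alpha_{\!A_1}^-
  = \Psi_{\!\text{mod-}A_2}\cir\Psi_{\!\text{mod-}A_1} \,,
  \ee
where the last equality is again \erf{PsiA}, now for $A_2$ and for $A_1$ respectively.

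This computation produces the composite in the order $\Psi_{\!\text{mod-}A_2}\cir\Psi_{\!\text{mod-}A_1}$, whereas \erf{PsiAA'} asserts $\Psi_{\!\text{mod-}A}\cir\Psi_{\!\text{mod-}A'}=\Psi_{\!\text{mod-}A_1}\cir\Psi_{\!\text{mod-}A_2}$, so the remaining and genuinely essential step is to show that these two orders agree up to isomorphism. The natural way to close this gap is to observe that $A_1\oti A_2$ and $A_2\oti A_1$ are isomorphic as algebras via the braiding $c_{A_1,A_2}$ (this is exactly the isomorphism recorded in the parenthetical remark of the paragraph on braided induction for tensor products), and that isomorphic algebras have equivalent module categories and hence isomorphic functors $\Psi$; running the displayed chain with the roles of $A_1$ and $A_2$ exchanged gives $\Psi_{\!\text{mod-}A_2\otimes A_1}\cong\Psi_{\!\text{mod-}A_1}\cir\Psi_{\!\text{mod-}A_2}$, and combining this with $\Psi_{\!\text{mod-}A_1\otimes A_2}\cong\Psi_{\!\text{mod-}A_2\otimes A_1}$ yields the claimed order. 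I expect this symmetry step to be the main obstacle: it must be checked that the algebra isomorphism $A_1\oti A_2\cong A_2\oti A_1$ is compatible with the braided-induction data so that it really induces a monoidal isomorphism of the associated $\Psi$-functors, rather than merely an abstract equivalence of the underlying module categories. Verifying this compatibility — equivalently, that the equivalence mod-$(A_1\oti A_2)\simeq$ mod-$(A_2\oti A_1)$ intertwines the two braided-induction structures $\alpha^\pm$ — is the one place where the braiding axioms have to be used in earnest, and it is where care is needed to avoid the circular appeal to an "evident symmetry".
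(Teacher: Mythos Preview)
Your diagram chase is exactly what the paper intends: its one-line proof ``follows immediately from the commuting diagram'' unpacks precisely to your computation, reading off $\alpha_{A_1\otimes A_2}^\pm$ via the outer cells and then using $\tilde\nu$ to route the composite through the middle copy of $\calc$.

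Where you go wrong is in the final paragraph. The diagram genuinely yields $\Psi_{\text{mod-}A_1\otimes A_2}\cong\Psi_{\text{mod-}A_2}\circ\Psi_{\text{mod-}A_1}$; your observation that this differs from the printed order is correct, and the discrepancy is simply a slip in the statement, harmless because the only application is to $A=A'=\AP$, where the order is irrelevant. Your proposed repair, however, does not work and should be deleted. The parenthetical you cite does \emph{not} assert that $A_1\otimes A_2$ and $A_2\otimes A_1$, both taken with the paper's over-braiding convention, are isomorphic as algebras; it says that the \emph{under}-braiding algebra on the object $A_1\otimes A_2$ is isomorphic to the over-braiding algebra on $A_2\otimes A_1$, which is a different statement. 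In a genuinely braided (non-symmetric) category the two over-braiding tensor products are typically not isomorphic, and if your argument went through it would force $\Psi_{\text{mod-}A_1}$ and $\Psi_{\text{mod-}A_2}$ to commute for \emph{all} Azumaya $A_1,A_2$, i.e.\ the image of the Brauer class map in the group of braided autoequivalences of $\cald$ would be abelian. This is false already for $\cald=\calc^{\boxtimes n}$ with $n\ge3$, where the permutation Azumaya algebras realise all of $S_n$. So the ``symmetry step'' is not a gap to be closed but a statement that is false in general; the correct conclusion is that the diagram delivers $\Psi_{A'}\circ\Psi_A$, and that is all that is needed downstream.
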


\begin{proof}
Given the definition \erf{PsiA} of the monoidal functors, the statement
follows immediately from the commuting diagram \erf{bigdiagram},
from which one can also read off the monoidal natural isomorphism.
\end{proof}

\begin{prop}\label{prop1}
The Azumaya algebra $\AP\oti \AP$ in \cald\ is Morita equivalent to the
tensor unit $\oned$ of \cald.
\end{prop}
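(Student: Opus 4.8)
The plan is to reduce the statement to a computation of the braided-equivalence functor $\Psi_{\text{mod-}\AP\oti\AP}$ and to observe that it is trivial. The key input is the previous lemma, which gives $\Psi_{\text{mod-}\AP\oti\AP} \cong \Psi_{\text{mod-}\AP}\cir\Psi_{\text{mod-}\AP}$ as monoidal functors. By \erf{PsiAp} the functor $\Psi_{\text{mod-}\AP}$ acts on objects of $\cald\eq\calc\boti\calc$ by exchanging the two tensor factors, $U\boti V\mapsto V\boti U$; composing it with itself therefore returns $U\boti V$, i.e.\ $\Psi_{\text{mod-}\AP\oti\AP}$ is isomorphic to the identity functor $\Id_\cald$ as a monoidal functor. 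The remaining task is to translate "$\Psi_{\text{mod-}B}\cong\Id_\cald$" into "$B$ is Morita equivalent to $\oned$".

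First I would recall, following \cite{danO} as quoted above, that an indecomposable $\cald$-module $\calm$ over the modular tensor category $\cald$ is classified up to equivalence by its pair of connected \'etale algebras $B_1,B_2$ together with the braided equivalence $\Psi_{\!\calm}\colon B_1\Mod^0\ucald \To B_2\Mod^{0\,\rev}\ucald$; and that for $\calm\eq$ mod-$A$ with $A$ Azumaya one has $B_1\eq B_2\eq\oned$ and $\Psi_{\!\text{mod-}A}=(\alpha_{\!A}^+)^{-1}\cir\alpha_{\!A}^-\colon\cald\To\cald$. For $A\eq\AP\oti\AP$ we have just seen that this invariant is (isomorphic to) the identity. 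The transparent module $\T_\cald$ — the regular module, which is mod-$\oned$ — has precisely the same invariants: $B_1\eq B_2\eq\oned$ and $\Psi_{\!\T_\cald}=\Id_\cald$. Hence the classification of \cite{danO} forces an equivalence $\text{mod-}(\AP\oti\AP)\simeq\T_\cald$ of $\cald$-module categories. Finally I would invoke the standard dictionary (Ostrik) that an equivalence of module categories mod-$A\simeq$ mod-$A'$ is the same thing as a Morita equivalence $A\sim A'$ of the underlying algebras; applying this with $A'\eq\oned$ gives $\AP\oti\AP\sim\oned$, which is the claim.

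An alternative, more hands-on route avoids explicitly invoking the classification theorem: one knows that any Azumaya algebra $A$ in a braided fusion category is Morita equivalent to $\oned$ precisely when the braided autoequivalence $\Psi_{\text{mod-}A}$ it induces is trivial (this is essentially the statement that the relevant Brauer–Picard-type obstruction vanishes), and then $\AP\oti\AP$ is handled by the lemma exactly as above. Either way, the logical skeleton is: lemma $\Rightarrow$ $\Psi_{\text{mod-}\AP\oti\AP}\cong(\text{swap})^2=\Id$ $\Rightarrow$ same classifying data as $\oned$ $\Rightarrow$ Morita equivalence.

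The main obstacle I anticipate is not the algebra but the bookkeeping of \emph{monoidal} structures: one must be careful that the isomorphism $\Psi_{\text{mod-}\AP}\cir\Psi_{\text{mod-}\AP}\cong\Id_\cald$ is an isomorphism of monoidal functors (not merely of plain functors), so that it feeds correctly into the \cite{danO} classification, which sees the monoidal/braided refinement. The needed monoidal natural isomorphism is already available: it can be read off from the big commuting diagram \erf{bigdiagram} (as the proof of the lemma points out) together with the monoidal natural isomorphism implementing \erf{PsiAp}, i.e.\ the one underlying $\alpha_\AP^+(U\boti V)\cong\alpha_\AP^-(V\boti U)$ from \cite[Prop.\,7.3]{bfrs2}. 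Chasing these coherence data is routine but is the only place where real care is required; once it is done, the Morita equivalence is immediate.
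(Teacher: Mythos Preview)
Your proposal is correct and follows essentially the same route as the paper: combine the lemma \erf{PsiAA'} with \erf{PsiAp} to obtain $\Psi_{\!\text{mod-}\AP\otimes\AP}\cong\Id_\cald$, and then invoke \cite[Cor.\,3.8]{danO} to conclude that $\text{mod-}(\AP\oti\AP)\simeq\text{mod-}\oned$, i.e.\ Morita equivalence. The paper's proof is just a two-line version of what you wrote; your additional remarks on tracking the monoidal coherence are a reasonable elaboration but do not change the argument.
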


\begin{proof}
Combining \erf{PsiAp} and \erf{PsiAA'} we have $\Psi_{\!\text{mod-}\AP\otimes\AP}
\,{\cong}\, \Psi_{\!\text{mod-}\AP} \cir \Psi_{\!\text{mod-}\AP} \eq \Id_\cald$.
Thus by \cite[Cor.\,3.8]{danO} the module categories mod-$\AP\Oti\AP$ and 
mod-$\oned$ are equivalent, i.e.\ $\AP\oti \AP$ and $\oned$ are 
Morita equivalent.
\end{proof}

\paragraph{Categories of defect Wilson lines.}
 
We are now in a position to find the relevant categories of defect Wilson 
lines. Consider two types of surface defects separating a topological phase
of type \cald\ from itself, corresponding to two \cald-modules $\calm_1$ 
and $\calm_2$. According to \cite{fusV} the category of surface Wilson lines 
separating $\calm_1$ from $\calm_2$ is the functor category 
$\Fun_\cald(\calm_1,\calm_2)$ of $\cald$-module functors. If the left module 
categories $\calm_1$ and $\calm_2$ are realized as the categories of right 
modules over algebras $A_1$ and $A_2$ in \cald, respectively, then this 
functor category is equivalent to the category of $A_1^{}\oti A_2\op$-modules
in \cald. We are interested in the situation that the two \cald-modules in
question are either $\T_\cald$ or \P, corresponding to the transparent and to 
the permutation twist defect. Proposition \ref{prop1} tells us that under 
forming tensor products the Azumaya algebra \AP\ for the twist defect \P\ has 
order two up to Morita equivalence; as a consequence, when calculating the 
functor categories we can work with \AP\ in place of $A_{\mathcal P}\op$. We 
then find the following categories of defect Wilson lines:
  \def\leftmargini{1.27em}~\\[-1.25em]\begin{itemize}\addtolength{\itemsep}{-7pt}

\item
The category $\Fun_\cald(\T_\cald,\T_\cald)$ of defect Wilson
lines separating the transparent surface defect from itself is just
\cald, as expected:
  \be
  \Fun_\cald(\T_\cald,\T_\cald) \simeq (\oned{\otimes}\oned)\Mod\ucald
  \cong \oned\Mod\ucald \cong \cald \,.
  \labl{funTT}

\item
There are two categories of defect Wilson lines separating the transparent
defect from the twist defect, $\Fun_\cald(\T_\cald,\P)$ and 
$\Fun_\cald(\P,\T_\cald)$; we find
  \be
  \Fun_\cald(\T_\cald,\P) \simeq (\oned{\otimes}\AP)\Mod\ucald
  \cong \AP\Mod\ucald \cong \calc 
  \ee
and, in a similar manner, $\Fun_\cald(\P,\T_\cald) \,{\cong}\, \calc$. As shown 
in \cite[Sect.\,6.2]{fusV}, each such Wilson line labeled by $W\iN\calc$ gives 
rise to a (special symmetric Frobenius) algebra in the Morita class of \AP; this
is actually just the internal end $\underline\End_\calc(W)$ 
\cite[Sect.\,3]{ostr}.

\item
Finally, the category of defect Wilson
lines separating the twist defect from itself is
  \be
  \Fun_\cald(\P,\P) \simeq (\AP{\otimes}\AP)\Mod\ucald
  \simeq \oned\Mod\ucald \cong \cald \,.
  \labl{funPP}
\end{itemize}
The category describing Wilson lines which separate the twist defect from the 
transparent defect provides the labels for a permutation-type ``genon'' 
of \cite{bajQ2}. Genons are thus labeled by objects of the category \calc.

\paragraph{Fusion of surface defects.}
 
Topological surface defects can be fused. In the particular situation of two
surface defects separating a topological phase of type \calc\ from itself, 
described by \calc-mo\-du\-les $\calm_1$ and $\calm_2$, the fusion product is 
the \calc-module $\calm_1 \,{\boxtimes_\calc}\, \calm_2$. The Deligne product 
$\boxtimes_\calc$ of two module categories over a braided fusion category is 
the categorification of the tensor product $M_1 \,{\otimes_R}\, M_2$ of two 
left modules $M_1$ and $M_2$ over a commutative ring $R$ and has a similar 
universal property.  For a precise definition see 
\cite[Def.\,3.3\,\&\,Sect.\,4.4]{enoM}. 
By \cite[Prop.\,3.5]{enoM} there is an equivalence
  \be
  \calm_1^{} \boxtimes_\calc \calm_2^{} \simeq \Fun_\calc(\calm_1\op,\calm_2^{}) 
  \labl{MoDM}
of abelian categories. 

The right hand side of \erf{MoDM} has a natural interpretation \cite{fusV} as 
the category of surface Wilson lines that separate the surface defect labeled 
by $\calm_1\op$ from the one labeled by $\calm_2^{}$.
This is no coincidence: if we realize two module categories $\calm_1$ and 
$\calm_2$ over a modular tensor category \calc\ as the categories of right 
modules over algebras $A_1$ and $A_2$ internal in \calc, the fused module 
category $\calm_1 \,{\boxtimes_\calc}\, \calm_2$ is realized by the category of 
right $A_1{\otimes}A_2$-modules. Now consider a defect surface with the topology 
of a plane, separated by a Wilson line into two half planes labeled by $A_1$ 
and by $A_2$, respectively. Wilson lines of this type are labeled by the 
category of $A_1$-$A_2$-bimodules which equals, as an abelian category,
$(A_1^{}{\otimes}A_2\op)\Mod\ucalc$. By folding 
the plane along the Wilson line we arrive at a configuration in which a Wilson 
line separates the transparent surface defect $\T_\calc$ from the surface 
defect that is obtained by fusing the surface defect with label $A_1$ with the
orientation-reversed surface defect for $A_2$. This is the \calc-module category
  \be
   A_1^{}\Mod\ucalc \,\boxtimes_\calc A\op_2\Mod\ucalc 
  \cong (A_1^{}{\otimes} A_2\op)\Mod\ucalc \,.
  \ee
Now for any \calc-module category \calm, the category of surface Wilson lines 
separating the transparent defect $\T_\calc$ from \calm\ is 
$\Fun_\calc(\calc,\calm) \,{\cong}\, \calm$, and hence in the case at hand the 
abelian category $(A_1^{}{\otimes} A_2\op)\Mod$. Thus the equality of the two 
abelian categories can be understood through the folding procedure and
provides a consistency check on the description of the fusion of surface defects
by the Deligne product.

It follows in particular that the transparent defect, described by the algebra 
$\one_\calc$ in \calc, acts as a (bi)monoidal unit. In the case of permutation 
twist defects in the bilayer system based on $\cald=\calc\boxtimes\calc$, we get
  \be
  \P \boxtimes_\cald P_\cald \cong (\AP {\otimes} \AP)\Mod\ucald \cong \cald \,,
  \ee
where in the last step we used again that the Azumaya algebra $\AP\oti \AP$ is 
Morita equivalent to the tensor unit. Thus the fusion product of the twist defect
\P\ with itself is the transparent defect; this is certainly not unexpected.

\paragraph{More general Wilson lines.}
 
A topological field theory of Reshetikhin-Turaev type actually admits
more general types of Wilson lines, in which any finite number of surface 
defects meet. Locally in a three-manifold, the situation looks like in the 
following picture, in which, as in the formalism used in \cite{fusV2}, the 
locus of the Wilson line in a three-manifold is actually a tube:
  \eqpic{Figure13+Figure14}{410}{44} {
  \put(145,-2) {\Includepicfj{22}{fusV2_13}
  }
  }
In situation at hand, each of the surface defects that meet at the Wilson line
can be either the transparent defect or the twist defect. We denote the 
decorated one-dimensional manifold consisting of a circle with an 
$n_\T$-tuple $\vec p \,\,{\equiv}\, (p_1,p_2,...\,,p_{n_\T)}$ of 
points marked with the transparent defect $\T_\cald$ and an $n_\Po$-tuple
$\vec q \,\,{\equiv}\, (q_1,q_2,...\,,q_{n_\Po})$ of points marked by the twist 
defect \P\ by \Spq, and the associated category of generalized Wilson lines 
by $\tft(\Spq)$. Because of the $\zet_2$-fusion rules obeyed by the transparent 
and twist defects, the category $\tft(\Spq)$ is equivalent to \cald\ if 
$n_\Po$ is even, and equivalent to \calc\ if $n_\Po$ is odd.

A geometric understanding of the categories of generalized Wilson lines
is provided by the \emph{cover functor} $\cov$ of \cite[Prop.\,2]{barSc}, 
which maps the decorated one-manifold \Spq\ as follows to a two-sheeted cover 
of the circle $S^1$. First, take the disjoint union of two copies 
$\tilde S^{(1)}$ and $\tilde S^{(2)}$ of the non-connected manifold obtained by 
replacing the open intervals in $S^1\,{\setminus}\,(\vec p\,{\cup}\,\vec q)$ by 
closed intervals.  For each marked point $q_i^{}$ this gives two points 
$q_i^{1,l}$ and $q_i^{1,r}$ on $\tilde S^{(1)}$ and two points $q_i^{2,l}$ and 
$q_i^{2,r}$ on $\tilde S^{(2)}$. They are associated to the interval on the left 
hand side and to the one on the right hand side of $q_i \iN S^1$, respectively.
Next we identify $q_i^{2,l}$ with $q_i^{1,r}$ and $q_i^{1,l}$ with $q_i^{2,r}$.
For the points $p_i$ a similar construction is performed, but this time we
identify $p_i^{1,l}$ with $p_i^{,1,r}$ and $p_i^{2,l}$ with $p_i^{2,r}$. 
The two different identifications are illustrated in the left and right hand
parts of the following figure:
  \eqpic{figure1}{280}{30} {
  \put(0,-2) {\Includepicfj{62}{fuSc18_1a}
  \put(-25,49.4) {$\cov(\Sigma)$}
  \put(-12,-.2)  {$\Sigma$}
  \put(37.5,25.5){\sse$ q^{1,l} $}
  \put(37.5,71.5){\sse$ q^{2,l} $}
  \put(55.5,-6.1){\sse$ q $}
  \put(64.5,25.5){\sse$ q^{1,r} $}
  \put(64.5,71.5){\sse$ q^{2,r} $}
  }
  \put(180,-2) {\Includepicfj{62}{fuSc18_1b}
  \put(-25,49.4) {$\cov(\Sigma)$}
  \put(-12,-.2)  {$\Sigma$}
  \put(37.5,25.5){\sse$ q^{1,l} $}
  \put(37.5,71.5){\sse$ q^{2,l} $}
  \put(55.5,-6.1){\sse$ q $}
  \put(64.5,25.5){\sse$ q^{1,r} $}
  \put(64.5,71.5){\sse$ q^{2,r} $}
  } }
We have thus associated a two-sheeted cover $\cov(S) \To S$ to the decorated 
one-manifold \Spq. As already indicated in \erf{tftz2}, the equivariant 
topological field theory $\tft_\cald^{\zet_2}$ is obtained \cite{barSc} by 
applying the TFT functor associated to \calc\ to the cover.

Let us check that the categories of generalized Wilson lines for decorated
one-manifolds that we have computed in \erf{funTT}\,--\,\erf{funPP}
coincide with the evaluation of the 2-functor $\tft_\calc$ on the total
space $\cov(S)$ of the two-sheeted cover of $S$. If
the number $n_\Po$ of twist defects is even, then $\cov(S) \To S$ is the 
trivial cover whose total space has two connected components; we thus get
  \be
  \tft^{\zet_2}_\cald(S)
  = \tft_\calc(S^1\,{\sqcup}\,S^1) \simeq \tft_\calc(S^1)\boxtimes\tft_\calc(S^1)
  = \calc \boxtimes \calc = \cald \,.
  \ee
If $n_\Po$ is odd, then the total space $\cov(S)$ is connected and we obtain
instead the category $\tft^{\zet_2}_\cald(S) \eq \tft_\calc(S^1) \eq \calc$.


\section{Generalized conformal blocks and their dimensions}
\label{sec:3}

To an oriented surface $\Sigma$ with boundaries, an extended topological field 
theory assigns a functor. More explicitly, given a decomposition 
$\partial\Sigma \eq {-}{\partial\Sigma_-} \,{\sqcup}\, \partial\Sigma_+$
of the boundary into incoming and outgoing parts, we get a functor
  \be
  \tft_\cald(\Sigma): \quad \tft_\cald(\partial \Sigma_-) \longrightarrow
  \tft_\cald(\partial \Sigma_+) \,.
  \ee
To achieve a detailed understanding of these functors, two particular 
perspectives prove to be helpful: First, the functors for
particularly simple surfaces can be assembled to obtain those for more
complicated surfaces. Second, by assigning specific objects of \cald\ to
the boundary surfaces, one arrives at vector spaces of (generalized) 
conformal blocks. We address both of these points of view.

\paragraph{Mixed tensor products.}

The basic observation that allows for the first perspective is that any 
oriented surface with boundary admits a decomposition into pairs of pants,
cylinders and disks. Evaluating the TFT-functor on the pair of pants $\pop$ 
with two incoming  and one outgoing boundary circles gives a functor
  \be
  \tft_\cald(S^1) \boxtimes \tft_\cald(S^1) \cong
  \tft_\cald(S^1 \,{\sqcup}\, S^1) \longrightarrow \tft_\cald(S^1) \,.
  \ee
This endows the category \cald\ associated to the circle with a tensor product 
functor. (An associativity constraint for this tensor product is then
provided by the natural transformation that the TFT associates to a suitable
three-manifold with corners; this way \cald\ becomes a monoidal category.)

Once we allow for non-trivial defects, we get additional categories 
associated to decorated circles, and thereby additional  tensor products; 
they relate different categories and are thus \emph{mixed} tensor products. 
In the case of twist defects, these tensor products can be extracted from 
the underlying equivariant topological field theory. If we deal with
a permutation equivariant theory, mixed tensor products have 
been computed \cite[Sect.\,4.3]{barSc} with the help of the cover functor, 
which we already encountered in \erf{tftz2}. Proceeding in this way we find:
  \def\leftmargini{1.27em}~\\[-1.25em]\begin{itemize}\addtolength{\itemsep}{-7pt}

\item
Denote by $n_1$ and $n_2$ the numbers of twist defects ending on the two ingoing 
circles and by $n_3$ the number of twist defects of the outgoing circle. On the 
pair of pants we then have $(n_1 {+} n_2 {+} n_3) /2$ lines that connect 
boundary circles, all labeled by the twist defect. The following picture shows 
a case with $n_1 \eq 5$, $n_2 \eq 7$ and $n_3 \eq 6$ and with each line 
connecting two different circles:
  \eqpic{figure3}{130}{42} {
  \put(0,-9) {\Includepicfj{35}{fuSc18_3}
  } }

\item
If the numbers $n_1$ and $n_2$ are both even, then $n_3$ is necessarily even as 
well. Thus the categories assigned to the boundary circles are all equivalent 
to \cald. According to \cite{barSc}, the functor associated to any such pair 
of pants is the tensor product in the monoidal category \cald:
  \be
  \begin{array}{rrll}
  \tft_\cald(\pop_{n_1,n_2,n_3}) :& \cald\times\cald &\!\longrightarrow\! &\cald
  \\[3pt]
  & (U_1{\boxtimes} U_2) \Times (V_1{\boxtimes} V_2) &\! \longmapsto \!&
  (U_1{\otimes}V_1) \boti (U_2{\otimes}V_2) \,.
  \eear
  \ee

\item
If $n_1$ is even and $n_2$ is odd, then $n_3$ is necessarily odd, and the 
functor is the mixed tensor product:
  \be
  \begin{array}{rrll}
  \tft_\cald(\pop_{n_1,n_2,n_3}) : & \cald\times\calc &\!\longrightarrow\! &\calc
  \\[3pt]
  & (U_1{\boxtimes} U_2) \Times M &\! \longmapsto \!& U_1\oti U_2\oti M \,,
  \eear
  \ee
where on the right hand side the tensor product in \calc\ appears.
\\
The situation is analogous when $n_1$ is odd and $n_2$ is even.

\item
If both $n_1$ and $n_2$ are odd, then $n_3$ is even. In this case
the functor is the one computed in \cite[Sect.\,4.3,\, p.\,314]{barSc}:
  \be
  \begin{array}{rrll}
  \tft_\cald(\pop_{n_1,n_2,n_3}) : & \calc\times\calc &\!\longrightarrow\!& \cald
  \\[3pt]
  & M \times N &\! \longmapsto \!&
  \bigoplus_{i\in I_\calc}\, (M\oti N \oti S_i^\vee) \boti S_i^{} \,.
  \eear
  \ee
\end{itemize}

\noindent
Again we have a geometric understanding of these functors. Consider a pair 
of pants $\pop$ with a pattern of non-intersecting surface defect lines, 
as in the figure \erf{figure3}. And again we take two copies 
of $\pop$ with the defect lines removed and glue them together according to
the prescription in the figure \erf{figure1}, with the identification
depending on whether the defect line is labeled by the transparent
defect or the twist defect. This way we get a two-sheeted cover 
$\tilde\pop _{n_1,n_2,n_3}\To \pop_{n_1,n_2,n_3}$ which restricts 
on the boundary circles to the cover constructed in Section \ref{sec:2}. 
The functors $\tft_\cald(\pop_{n_1,n_2,n_3})$ just described are then
the functors $\tft_\calc(\tilde \pop_{n_1,n_2,n_3})$.

\paragraph{Spaces of conformal blocks.}

We now turn to the second perspective and focus on spaces of conformal blocks.
Consider an oriented surface $\Sigma$, for the moment without defects, with 
only ingoing boundary circles, i.e.\
$\partial\Sigma \eq {-}\partial\Sigma_- \,{\cong}\, (S^1)^{\sqcup n}$ and
$\partial_+\Sigma \eq \emptyset$. The extended topological field theory provides
a functor $\tft_\cald(\Sigma)\colon \cald^{\boxtimes n}\To\Vect$. Specifying an 
object in \cald\ for each boundary circle, we obtain a vector space, known as 
a space of \emph{conformal blocks}. In applications to topological quantum 
computing, these spaces are the spaces of ground states and are thus the 
recipients of qubits. The dimension of the space of conformal blocks, and 
thus the ground state degeneracy, is computed by the Verlinde formula. 

In the presence of surface defects in the \threedim\ theory, the topological 
surface $\Sigma$ is endowed with a collection of  non-intersecting lines. Such 
lines are closed or have end points on the boundary circles of $\Sigma$. Each 
segment of a line is labeled either by the transparent defect $\T_\cald$ or by 
the twist defect \P. A typical situation is displayed in the following figure:
  \eqpic{figure2}{240}{35} {
  \put(0,-7) {\Includepicfj{35}{fuSc18_2}
  } }
Here lines labeled by the transparent defect $T_\cald$ are drawn as dotted 
lines, while those labeled by the twist defect \P\ are drawn as solid lines. 

A boundary circle is drawn as a double line if an even number of \P-lines ends 
on it; the associated category is $\cald \eq \calc\boti\calc$. 
Boundary circles with an odd number of \P-lines are drawn as single lines; 
the corresponding category is \calc. Among the single-line circles are 
those on which only one \P-line ends; these are ``genons''. (Thus in the picture 
\erf{figure2} there is one genon, the lower of the two single-line circles.)

Our task is to construct for a surface $\Sigma$ with $m_0$ boundary circles having
an even number of \P-lines and $m_1$ boundary circles having an odd number
a functor
  \be
  \tft_\cald(\Sigma):\quad \cald^{\boxtimes m_0}\boxtimes\calc^{\boxtimes m_1}
  \longrightarrow \Vect \,\,  
  \labl{covfunc}
that describes generalized conformal blocks, including their dependence on the 
labels of the boundary circles. By the axioms of topological field 
theory, the gluing of surfaces must give rise to the composition of the 
associated functors. As a consequence, the functor $\tft_\cald(\Sigma)$
can be expressed as a composite of the functors arising in a pair-of-pants 
decomposition of the surface $\Sigma$. The latter are already known  from the 
previous discussion: they are provided by the equivariant topological
field theory $\tft_\cald^{\zet_2}$. Note that surfaces with $\zet_2$-covers
can be glued, and in an equivariant topological field theory this
translates into the composition of functors.

Hereby we are led to the following generalization of the construction for 
pairs of pants given above.
To a surface $\Sigma$ with embedded $\T_\cald$-lines and \P-lines we associate
a two-sheeted cover $\cov(\Sigma) \To \Sigma$ as follows: Again we glue together 
two copies of $\Sigma$ with all defect lines removed, in a way that is 
determined by the surface defect labeling the defect line, as in the figure 
\erf{figure1}. (A variant is to glue standard disks to
the boundary circles of $\Sigma$ so as to get closed oriented surfaces. In this 
formulation one gets two-fold \emph{branched} covers, with branch points in 
disks whose boundaries contain an odd number of
twist defects. As mentioned, circles with one twist defect 
describe genons; they are thus end points of branch cuts, compare \cite{bajQ4}.)

It is crucial that the construction of $\zet_2$-covers is compatible with
the gluing of surfaces with defects. Indeed, consider the surface 
$\Sigma_1{\#}\Sigma_2$ that is obtained by gluing together, along appropriate 
boundary circles, surfaces $\Sigma_1$ and $\Sigma_2$ with defects. Then the 
$\zet_2$-cover of $\Sigma_1\#\Sigma_2$ furnished by our construction is the 
same as the surface obtained by gluing the cover $\cov(\Sigma_1) \To \Sigma_1$ 
to the cover $\cov(\Sigma_2) \To \Sigma_2$,
  \be
  \cov(\Sigma_1{\#}\Sigma_2) = \cov(\Sigma_1) \,\#\, \cov(\Sigma_2) \,.
  \ee
We thus conclude that the generalized conformal block functor for a general 
surface as in \erf{covfunc} 
is obtained by applying $\tft_\calc$ to the two-fold cover $\cov(\Sigma)$, i.e.\ 
we have $\tft_\cald(\Sigma) \eq \tft_\calc(\cov(\Sigma))$ as in \erf{tftz2}.
This provides a model independent confirmation of an insight gained
in the study \cite{bajQ2} of several classes of models.

\paragraph{A generalization of the Verlinde formula.}

What we have achieved is to identify the generalized conformal blocks
in the bilayer topological field theory based on \cald\ associated to a surface 
with defects with ordinary conformal blocks for $\tft_\calc$ on the cover
of that surface. As a consequence, we can compute the dimension of these spaces
with the ordinary Verlinde formula for the theory based on \calc. 

Consider a closed surface of genus $g$, and thus of Euler characteristic $\chi 
\eq 2\,{-}\,2g$, obtained by gluing disks to the boundary circles of a surface 
$\Sigma$. Let $\Sigma$ have $N_0$ boundary circles with an even number of twist 
defects, labeled with objects $U_i\boti\tilde U_i \iN\cald$ for 
$i\eq 1,2,...\,, N_0$, where $U_i,\tilde U_i\iN\calc$, and $N_1$ boundary 
circles with an odd number twist defects, labeled with objects $V_j\iN\calc$ 
for $j\eq 1,2,...\,, N_1$.  Then by the Riemann-Hurwitz theorem the cover 
$\cov(\Sigma)$ has Euler characteristic $2\chi \,{-}\,N_1$, i.e.\ the genus of 
the cover increases linearly with $N_1$. A boundary circle of $\Sigma$ with 
an even number of twist defects has a pre-image on $\cov(\Sigma)$ consisting
of two circles; we label them by the objects $U_i$ and $\tilde U_i$ of \calc,
respectively, with the relevant value of $i$. A boundary circle of $\Sigma$ with 
an odd number of twist defects has a single circle as its pre-image, which
we label by the appropriate object $V_j\iN\calc$. Taking for simplicity the 
objects $U_i$, $\tilde U_i$ and $V_j$ to be simple, we arrive this way at the 
following formula for the dimensions of spaces of generalized blocks:
  \be
  \dimc\big( \tft_\cald(\Sigma;\, \{ U_i{\boxtimes}\tilde U_i \}\,,\{ V_j \} )\big)
  = \sum_{n\in I_\calc} (S_{0,n})^{2\chi-N_1}
  \prod_{i=1}^{N_0} \frac{S_{U_i,n}}{S_{0,n}}\, \frac{S_{\tilde U_i,n}}{S_{0,n}}
  \prod_{j=1}^{N_1} \frac{S_{V_j,n}}{S_{0,n}} \,.
  \labl{generikverl}
Here $S$ is the modular S-matrix of the category \calc, i.e.\ the matrix 
obtained from $s \eq (s_{i,j})$ as given in \erf{smat} by rescaling such that 
$S$ is unitary and symmetric, and where $0\iN I_\calc$ is the isomorphism 
class of $\one_\calc$.

For instance, for $\Sigma \eq S^2$ of genus 0, $N_0 \eq 0$ and all $V_j$ equal,
the dimension is
  \be
  \dimc\big( \tft_\cald(S^2;\, \emptyset \,,\{ V,V,...\,,V \} )\big)
  = \sum_{n\in I_\calc} (S_{0,n})^{4-2N_1}_{} (S_{V,n})^{N_1}_{} .
  \labl{Vf00N}

\paragraph
{Dependence of the dimension of spaces of conformal blocks on the genon type.}

In the context of quantum computing twist defects are of interest because
the relevant spaces of conformal blocks are associated with surfaces of higher 
genus, so that they generically have larger dimension than conformal blocks for 
surfaces without twist defects. From this perspective it should also be 
appreciated that each genon comes with the choice of a label, which is an object
in the category \calc.  Since this datum enters in the dimension formula 
\erf{generikverl}, it constitutes an additional handle on increasing the 
dimension of the space of conformal blocks.

As a simple instructive example, take a sphere with four genons, i.e.\
$\Sigma \eq S^2$, $N_0 \eq 0$ and $N_1 \eq 4$, and take \calc\ to be the modular
tensor category of the critical Ising model, describing a free Majorana fermion.
(For a related discussion of genons in this model see \cite[Sect.\,V]{bajQ2}.)
Then $\cov(\Sigma) \eq \mathrm T$ is a torus with four boundary circles, and the 
set $I_\calc$ of isomorphism classes of simple objects of \calc\ has three 
elements, $I_\calc \eq \{ \one_\calc,\,\sigma,\,\epsilon \}$.
We consider two extreme choices for the labels of the genons. First, let
all genons be labeled by $\one_\calc$, which is the monoidal unit of \calc. 
Then the dimension of the space of generalized conformal blocks is
  \be
  \bearll
  \mathrm d_4(\one_\calc) \!\!\! &
  := \dimc\big( \tft_\cald( S^2;\, \emptyset
  \,,\{ \one_\calc,\one_\calc,\one_\calc,\one_\calc \} )\big)
  \Nxl2&
  \,= \dimc\big( \tft_\calc( \mathrm T; \one_\calc^{\;\otimes 4} )\big) 
  = \dimc\big( \tft_\calc( \mathrm T; \one_\calc )\big) = \big| I_\calc \big| = 3 \,.
  \eear \hsp{1.9}
  \ee
Second, if all genons are labeled by $\sigma$, then by using the
fusion rules $\sigma\oti\sigma \,{\cong}\, \one_\calc\,{\oplus}\, \epsilon$ 
and $\epsilon\oti\epsilon \,{\cong}\, \one_\calc$ we get
  \be
  \bearll
  \mathrm d_4(\sigma) \!\!\! & 
  := \dimc\big( \tft_\cald( S^2;\, \emptyset \,,\{ \sigma,\sigma,\sigma,\sigma \} )\big)
  \Nxl2&
  \,= \dimc\big( \tft_\calc( \mathrm T; \sigma^{\otimes 4} )\big)
  = \dimc\big( \tft_\calc( \mathrm T; 2\,\one_\calc\,{\oplus}\,2\,\epsilon \big)
  \Nxl2&
  \,= 2\, \big[ \dimc\big( \tft_\calc( \mathrm T; \one_\calc )\big)
  + \dimc\big( \tft_\calc( \mathrm T; \epsilon )\big) \,\big]
  = 2\, (3 + 1) = 8 \,.
  \eear
  \ee
To see that these numbers agree with formula \erf{Vf00N}, just note that for
the Ising model we have $(S_{0,n}) \eq \frac12\,(1,\sqrt2,1)$ and 
$(S_{\sigma,n}) \eq \frac12\,(\sqrt2,0,{-}\sqrt2)$. 
The corresponding numbers for arbitrary $N_1$ are
  \be
  \mathrm d_{N_1}(\one_\calc) = 2^{N_1-3} + 2^{N_1/2-2} \qquad{\rm and}\qquad
  \mathrm d_{N_1}(\sigma) =  2^{3N_1/2-3} , 
  \ee
respectively, and thus grow exponentially with the number of genons. The growth
depends, however, explicitly on the choice of label for the genon, and a 
judicious choice leads to more powerful codes.

\paragraph{Braiding.}

For a $\zet_2$-equivariant category, the notion of a braiding has to be
replaced by the notion of an \emph{equivariant braiding}. Concretely, part 
of the data are two autoequivalences $\tau_\cald\colon \cald\To\cald$ and 
$\tau_\calc\colon \calc\To\calc$ of the categories involved. As shown in
\cite{barSc}, $\tau_\cald$ act as $\tau_\cald(U\boti V) \eq V\boti U$,
while $\tau_\calc$ is the identity endofunctor of \calc. If $V\iN\calc$ 
labels a circle with an odd number of \P-defects, then the equivariant braiding 
is given by functorial isomorphisms $c_{V,W}\colon V\oti W\To \tau(W)\oti V$. 
These isomorphisms have been computed \cite[Sect.\,4.6]{barSc} from the cover 
functor and look as follows.
  \def\leftmargini{1.27em}~\\[-1.25em]\begin{itemize}\addtolength{\itemsep}{-7pt}

\item
The braiding of two objects $V\eq V_1\boti V_2\iN\cald$ and 
$W \eq W_1\boti W_2\iN\cald$ is just $c^\calc_{V_1,W_1} \boti c^\calc_{V_2,W_2}$
\cite[Eq.\,(23)]{barSc}, as one would expect for bilayer systems.

\item
The equivariant braiding of $V \eq V_1\boti V_2\iN\cald$ and
$W\iN\calc$ is more complicated; it involves the twist $\theta$ as well as
over- and underbraidings:
  \be
  c_{V,W}^{} = (c^\calc_{V_1,W} \oti \theta^{-1}_{V_2}) \circ
  (\id_{V_1}^{} \oti (c^{\calc}_{W,V_2})^{-1}_{}): \quad
  V_1\oti V_2\oti W\stackrel\cong\to W\oti V_1\oti V_2 \,.
  \ee

\item
Similarly, for $V\iN \calc$ and $W \eq W_1\boti W_2\iN\cald$ we have
  \be
  c_{V,W}^{} = (c^\calc_{W_1,W_2}\oti \id_V^{}) \circ
  ((c^{\calc}_{W_1,V})^{-1}_{} \oti \id_{W_2}^{}): \quad 
  V\oti W_1\oti W_2 \stackrel\cong\to W_2\oti W_1\oti V \,.
  \ee

\item
The equivariant braiding of two objects in \cald\ is still more
complicated; we refer to the last equation-picture in \cite[Sect.\,4.6]{barSc}.
\end{itemize}

It is an interesting and important problem to obtain the appropriate
generalizations of mapping class groups of surfaces with 
boundary disks and defect lines and the representations of these groups on 
the spaces of generalized conformal blocks, as well as to relate them to
representations of mapping class groups of higher genus surfaces.
For results in the condensed matter literature see \cite{bajQ2,fhnqww}.
This issue is of direct relevance for the problem of implementing 
universal quantum gates on topological codes described by these spaces of 
generalized conformal blocks. It has already been demonstrated
\cite{frnw,bajQ2} that the presence of twist defects, via the
induced mapping class group actions of higher genus, renders the
double layer Ising system with permutation twist defects universal for 
quantum computing, while it is non-universal without defects in genus zero.

\paragraph{A remark on orbifolding.}

We conclude this note with a speculative remark. As pointed out in
\cite{bajQ2}, gauging the symmetry that underlies a twist defect can
deconfine the extrinsic defects such that they become intrinsic quasi-particles
in a topological phase described by the corresponding orbifold theory. As a 
physical mechanism for such a gauging, based on the analogy with the emergence 
of a $\zet_2$-gauge theory by a proliferation of double vortices in a 
superfluid the authors of \cite{bajQ2} propose a proliferation of 
double-twist defects. We conjecture that on the level of 
topological field theory this mechanism is implemented by a three-dimensional 
analogue of the generalized orbifolds of \cite{ffrs6}, in which an 
orbifold construction is realized with the help of a network of defect lines.


\newpage

{\small
\noindent{\sc Acknowledgments:}
JF is still to some extent supported 
by VR under project no.\ 621-2009-3993. CS is partially supported 
by the Collaborative Research Centre 676 ``Particles, Strings and the Early 
Universe - the Structure of Matter and Space-Time'' and by the DFG Priority 
Programme 1388 ``Representation Theory''. JF is grateful to Hamburg 
University, and in particular to CS, Astrid D\"orh\"ofer and Eva Kuhlmann, 
for their hospitality when part of this work was done.
}

 \vskip 4em 

 \newcommand\wb{\,\linebreak[0]} \def\wB {$\,$\wb}
 \newcommand\Bi[2]    {\bibitem[#2]{#1}} 
 \newcommand\inBO[9]  {{\em #9}, in:\ {\em #1}, {#2}\ ({#3}, {#4} {#5}), p.\ {#6--#7} {\tt [#8]}}
 \newcommand\J[7]     {{\em #7}, {#1} {#2} ({#3}) {#4--#5} {{\tt [#6]}}}
 \newcommand\JO[6]    {{\em #6}, {#1} {#2} ({#3}) {#4--#5} }
 \newcommand\JP[7]    {{\em #7}, {#1} ({#3}) {{\tt [#6]}}}
 \newcommand\BOOK[4]  {{\em #1\/} ({#2}, {#3} {#4})}
 \newcommand\PhD[2]   {{\em #2}, Ph.D.\ thesis #1}
 \newcommand\Prep[2]  {{\em #2}, preprint {\tt #1}}
 \def\aagt  {Alg.\wB\&\wB Geom.\wb Topol.}     
 \def\adma  {Adv.\wb Math.}
 \def\atmp  {Adv.\wb Theor.\wb Math.\wb Phys.}   
 \def\anma  {Ann.\wb Math.}
 \def\comp  {Com\-mun.\wb Math.\wb Phys.}
 \def\ijmp  {Int.\wb J.\wb Mod.\wb Phys.\ A}
 \def\imrn  {Int.\wb Math.\wb Res.\wb Notices}
 \def\jgap  {J.\wb Geom.\wB and\wB Phys.}
 \def\jomp  {J.\wb Math.\wb Phys.}
 \def\jhrs  {J.\wB Homotopy\wB Relat.\wB Struct.}
 \def\joal  {J.\wB Al\-ge\-bra}
 \def\jpaa  {J.\wB Pure\wB Appl.\wb Alg.}
 \def\jram  {J.\wB rei\-ne\wB an\-gew.\wb Math.}
 \def\leni  {Lenin\-grad\wB Math.\wb J.}
 \def\nupb  {Nucl.\wb Phys.\ B}
 \def\phlb  {Phys.\wb Lett.\ B}
 \def\quto  {Quantum Topology}
 \def\phrb  {Phys.\wb Rev.\ B}
 \def\phrx  {Phys.\wb Rev.\ X}
 \def\sema  {Selecta\wB Mathematica}
 \def\trgr  {Trans\-form.\wB Groups}

\small

\end{document}